\title{Sustainable Exploitation Equilibria \\for Dynamic Games with Irreversible Failure}
\author{
Nicholas H. Kirk$^{1,2}$\thanks{Email: \texttt{kirk@markovian.group}}\\
$^{1}$ Markovian Group\\
$^{2}$ Sa\"id Business School, University of Oxford
}
\date{}
\newtheorem{definition}{Definition}
\newtheorem{theorem}{Theorem}
\begin{document}
\maketitle

\begin{abstract}

We study dynamic relationships in which one party extracts current surplus in ways that degrade the future state, while the counterparty cannot exit but adjusts effort in response. Standard stationary Markov equilibria may sustain collapse paths in which short-run extraction dominates strictly positive continuation gains. We introduce the Sustainable Exploitation Equilibrium (SEE), a refinement for dynamic games with irreversible failure modeled as an absorbing boundary that eliminates continuation value. When a survival-preserving action exists and failure destroys future surplus, equilibria assigning positive probability to collapse are sequentially irrational. Equilibrium analysis can therefore be restricted, without loss, to continuation-preserving stationary Markov equilibria. Within this restricted domain, viable renegotiation-proofness becomes structural: because failure truncates future surplus, any jointly improving survival-preserving deviation is credible prior to collapse. SEE selects the viable, renegotiation-stable equilibrium that maximizes the exploiter’s value. Existence is established under standard conditions, and the refinement is illustrated in a hegemon–client setting.
\end{abstract}

\noindent \textbf{Keywords:} Markov perfect equilibrium, equilibrium refinements, viability, renegotiation-proofness, sustainability\\
\textbf{JEL:} C72, C73, D74, D90.

\section{Introduction}
Exploitation arises when one party extracts value from another whose continued functioning is necessary for the relationship itself. In many economic and political environments, aggressive extraction can erode the very state that sustains the interaction—whether by depleting productive capacity, destabilizing institutions, or undermining cooperation. A central strategic tension therefore arises: the actor with extraction power benefits from pushing the system harder today, but doing so risks destroying the future surplus the relationship generates.

Standard equilibrium concepts such as Markov Perfect Equilibrium (MPE) do not fully resolve this tension. Because equilibria are evaluated over all admissible state transitions, they may sustain collapse paths in which short-run extraction dominates strictly positive continuation values. This paper introduces the Sustainable Exploitation Equilibrium (SEE), a refinement of stationary MPE for environments with irreversible failure. 

\paragraph{Why the refinement is necessary under irreversible failure.}
In dynamic environments without absorbing collapse, continuation payoffs are defined on a common domain across strategies. Irreversible failure changes this structure: once the state exits the survival region, the continuation problem collapses. Standard equilibrium notions evaluate strategies without restricting the admissible continuation domain, thereby permitting profiles that rely on the destruction of future surplus along the equilibrium path. SEE makes this restriction explicit by requiring that equilibrium paths preserve the continuation domain before selection among them. The refinement therefore reflects a consistency requirement imposed by irreversible failure rather than an additional constraint.

\section{Related Literature}
Our work connects three strands of existing literature.

\paragraph{Equilibrium refinements.}
SEE builds on the refinement tradition in dynamic games. \citep{MaskinTirole2001a} develop the theory of Markov perfect equilibrium; \citep{Bernheim1987} introduce coalition-proof Nash equilibrium; \citep{Selten1975} trembling-hand perfection; and \citep{FarrellMaskin1989} renegotiation-proofness in repeated games.

\paragraph{Viability and dynamic control.}
The viability constraint draws from \citep{Aubin1991} \emph{Viability Theory} and constrained Markov decision processes \citep{Altman1999}, ensuring state trajectories remain within sustainability sets. This connects to the resource economics literature, e.g. \citep{Dasgupta1979}, \citep{Chichilnisky1997}, and \citep{Dockner2000}, where long-run viability is central to dynamic exploitation of resources.

\paragraph{Political economy and relational contracts.}
SEE also relates to political economy models of extraction and state capacity (\citep{AcemogluRobinson2006}; \citep{BesleyPersson2011}) and to relational contracting without full commitment (\citep{ThomasWorrall1988}; \citep{Levin2003}). In these contexts, weaker parties cannot exit but retain effort margins that shape dynamics. SEE formalizes the logic of restrained exploitation within a Markovian framework.

\paragraph{Comparison.} In contrast to the above, this paper studies environments in which exit from a viability region constitutes irreversible failure that eliminates continuation value. Standard refinements select among equilibria defined on a fixed continuation domain. SEE instead restricts the admissible domain by requiring equilibrium paths to preserve the survival region before equilibrium selection. Viability and renegotiation-proofness therefore follow from sequential rationality under irreversible failure.

\section{Model and Definition}
\paragraph{Environment and timing.} Time is discrete, $t=0,1,2,\dots$. The state $s_t \in S$ evolves on a compact metric space $S$. In each period:
\begin{enumerate}
    \item The \emph{exploiter} $X$ observes $s_t$ and chooses extraction $x_t \in X(s_t)$.
    \item The \emph{exploitee} $E$ observes $(s_t,x_t)$ and chooses effort $e_t \in E(s_t)$.
    \item The state transitions to $s_{t+1} \sim Q(\cdot \mid s_t, x_t, e_t)$ (deterministic dynamics are $s_{t+1}=f(s_t,x_t,e_t)$).
\end{enumerate}
Per-period payoffs are $u^X(s_t,x_t,e_t)$ and $u^E(s_t,x_t,e_t)$, bounded and continuous. Both players discount with $\delta \in (0,1)$.

\subsection*{Roles of the players}
\begin{definition}[Exploiter]
The \emph{exploiter} $X$ directly extracts value via $x \in X(s)$, appropriating resources and typically reducing the future state. $X$'s stage payoff $u^X(s,x,e)$ is weakly increasing in $x$ for fixed $(s,e)$. A Markov policy for $X$ is $\sigma^X : S \to X(s)$.
\end{definition}

\begin{definition}[Exploitee (no exit, effort choice)]
The \emph{exploitee} $E$ cannot exit the relationship (no outside option) but chooses effort $e \in E(s)$ after observing $(s,x)$. Effort affects contemporaneous payoffs and/or the law of motion. A Markov policy for $E$ is $\sigma^E : S \times X(s) \to E(s)$ (second-mover response).
\end{definition}

\paragraph{Equilibrium notion (Markov--Stackelberg Equilibrium).}
A \emph{Markov--Stackelberg Equilibrium} (MSE) consists of policies $(\sigma^X,\sigma^E)$ such that:
(i) for each $(s,x)$, $\sigma^E(s,x)$ is a best response to $\sigma^X$ given follower timing;
(ii) $\sigma^X(s)$ anticipates $\sigma^E$ and is a best response given $E$'s continuation behavior.
Values $(W^X,W^E)$ solve the associated Bellman system.

\paragraph{Rational interpretation of viability.}
We interpret transitions outside a sustainability set $V \subset S$ as irreversible failure: absorption into a collapse region that eliminates continuation value. Under this interpretation, equilibrium behavior that places positive probability on exiting $V$ is incompatible with sequential rationality when failure entails sufficiently large losses and a survival-preserving action exists.

\begin{definition}[Sustainable Exploitation Equilibrium (SEE)]
Let $V\subseteq S$ be a nonempty compact viability set that is invariant under admissible state transitions. Let $\mathcal{E}$ denote the set of all stationary Markov Perfect Equilibria, and let $\mathcal{E}_V \subseteq \mathcal{E}$ denote the subset whose induced state process remains in $V$. Under rational viability (formalized below), restricting attention to $\mathcal{E}_V$ is without loss.

A \emph{Sustainable Exploitation Equilibrium} is any stationary MPE $\sigma^* \in \mathcal{E}$ satisfying:
\begin{enumerate}
    \item[(i)] \textbf{Viability:}
    The state induced by $\sigma^*$ remains in the viability set $V$ for all $s\in V$ (i.e., $\sigma^* \in \mathcal{E}_V$).
    \item[(ii)] \textbf{Renegotiation-proofness:}
    At any state $s\in V$, there is no alternative viable profile $\tilde{\sigma}\in\mathcal{E}_{V}$ such that
    $$ W_E(\tilde{\sigma}; s) \ge W_E(\sigma^*; s) \quad \text{and} \quad W_X(\tilde{\sigma}; s) \ge W_X(\sigma^*; s), $$
    with at least one inequality strict. Renegotiation is modeled in reduced form: because irreversible failure eliminates continuation value, any jointly improving survival-preserving deviation is credible prior to collapse. Renegotiation-proofness therefore acts as a stability requirement on continuation paths within the survival domain.
    \item[(iii)] \textbf{Exploiter-optimality:}
    Among all renegotiation-proof elements of $\mathcal{E}_{V}$, $\sigma^*$ maximizes the exploiter's value $W_X(\cdot; s)$ at state $s$ (ties broken lexicographically).
\end{enumerate}
Any profile satisfying (i)--(iii) is a Sustainable Exploitation Equilibrium.
\end{definition}

\paragraph{Multiplicity and uniqueness.}
The viability restriction removes collapse-inducing equilibria, while renegotiation-proofness eliminates survival-preserving equilibria that leave joint surplus unexploited. SEE therefore reduces the equilibrium correspondence relative to MPE and may yield uniqueness, although multiplicity can persist when several survival-preserving equilibria generate identical joint values.

\section{Relation to Nash and Markov Perfect Equilibrium}
We position the Sustainable Exploitation Equilibrium (SEE) within the standard equilibrium hierarchy.

\paragraph{Nash equilibrium (NE).}
A strategy profile $\sigma = (\sigma_X,\sigma_E)$ is a Nash equilibrium if no player can profitably deviate unilaterally. NE is broad and may rely on history-dependent strategies.

\paragraph{Subgame perfect equilibrium (SPE).}
A refinement of NE, SPE requires sequential rationality in every subgame. This eliminates non-credible threats. Clearly,
\[
\text{SPE} \subseteq \text{NE}.
\]

\paragraph{Markov perfect equilibrium (MPE).}
A further refinement of SPE for dynamic games with a payoff-relevant state $s \in S$. In MPE, strategies are Markovian: $\sigma_i(s)$ depends only on the current state, not the full history. For two-player sequential-move environments, the relevant benchmark is a \emph{Markov--Stackelberg equilibrium (MSE)}: the follower chooses $\sigma_E(s,x)$ as a best response given the state and leader’s action, while the leader anticipates this response in selecting $\sigma_X(s)$. Thus,
\[
\text{MPE (MSE)} \subseteq \text{SPE} \subseteq \text{NE}.
\]

\paragraph{Sustainable Exploitation Equilibrium (SEE).}
SEE is defined as a viable, renegotiation-proof MPE with exploiter-optimal selection. It follows immediately that
\[
\text{SEE} \subseteq \text{MPE} \subseteq \text{SPE} \subseteq \text{NE}.
\]

\paragraph{SEE is strictly stronger than renegotiation-proof MPE.}
Renegotiation-proofness selects among equilibria within a given continuation domain but does not restrict that domain itself. With an absorbing collapse state, equilibria may drive the system to failure when short-run gains exceed continuation values and yet remain renegotiation-proof because collapse eliminates future surplus. SEE excludes such outcomes by restricting attention to survival-preserving paths, thereby refining both equilibrium selection and the admissible continuation domain. Similar observations apply to coalition-proof refinements, which eliminate profitable coalitional deviations but likewise evaluate equilibria on a fixed continuation domain.

\section{Existence under rational viability}
We derive viability from rationality by interpreting exits from a sustainability set as irreversible failure.

\paragraph{Economic interpretation of irreversible failure.}

Irreversible failure is modeled as an absorbing boundary that eliminates continuation value. Economically this corresponds to events such as sovereign default, political collapse, financial liquidation, or irreversible depletion of productive capacity. When the discounted continuation loss exceeds the maximal short-run gain from over-extraction, survival-preserving behavior follows from sequential rationality.

\medskip
\noindent
\textbf{Assumption (Irreversible failure and safe action).}
Payoffs are bounded: $|u^X(s,x,e)| \le \bar u < \infty$. Let $V \subset S$ be a nonempty compact set. For each $s \in V$, there exists a \emph{safe} action $\underline{x}(s) \in X(s)$ such that, under the exploitee's best response,
\[
Q\!\left( V \mid s, \underline{x}(s), \sigma_E(s,\underline{x}(s)) \right) = 1 .
\]

\medskip
\noindent
\textbf{Assumption (Uniform exit risk for non-survival-preserving actions).}
For each $s \in V$, let
\[
\mathcal{X}_s^{\,N} \;=\; \{x \in X(s) : Q(V \mid s,x,\sigma_E(s,x)) < 1\}
\]
denote the set of actions that are not survival-preserving (given the exploitee's best response). Assume that the minimal probability of exiting $V$ over all such actions is bounded away from zero:
\[
p^\star \;=\; \inf_{s \in V} \inf_{x \in \mathcal{X}_s^{\,N}}
\big(1 - Q(V \mid s,x,\sigma_E(s,x))\big) \;>\; 0.
\]

\begin{theorem}[Existence of SEE under rational viability]
Suppose the state space $S$ and action sets $X(s), E(s)$ are compact, payoffs are bounded and continuous, and the transition kernel $Q$ satisfies the Feller property. Interpret $S \setminus V$ as an absorbing failure region that eliminates continuation value.

Then there exists $M^\star < \infty$ such that, for all failure penalties $M \ge M^\star$, every stationary Markov Perfect Equilibrium of the penalized game
\[
\hat u^X(s,x,e) \equiv u^X(s,x,e) - M \cdot \mathbf{1}\{ s' \notin V \},
\qquad s' \sim Q(\cdot \mid s,x,e),
\]
is viable. Moreover, among viable stationary Markov Perfect Equilibria, a nonempty compact subset of renegotiation-proof equilibria exists, and an exploiter-optimal selection exists. Hence a Sustainable Exploitation Equilibrium exists.
\end{theorem}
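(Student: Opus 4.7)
My plan is to address the three existence claims in sequence: (a) existence of a stationary MPE of the penalized game, (b) viability of every such MPE once $M$ is sufficiently large, (c) nonemptiness and compactness of the renegotiation-proof subset together with existence of an exploiter-optimal selection. For (a), I would invoke a standard Glicksberg--Fan fixed-point argument on the space of stationary Markov policies $\sigma = (\sigma^X,\sigma^E)$: compactness of $S$, $X(s)$, $E(s)$, continuity of the penalized payoffs $\hat u^X, u^E$, and the Feller property of $Q$ ensure that the best-response correspondence (with the Stackelberg ordering $E$ after $X$) is nonempty-valued, convex-valued, and has closed graph once $\sigma^E$ is treated as a measurable selection from the inner argmax. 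This yields $\mathcal{E}(M)\neq\emptyset$ for every $M\ge 0$.

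For (b), the core estimate compares the equilibrium value at any $s\in V$ to the value of a one-shot deviation to the safe action $\underline{x}(s)$ followed by equilibrium play. Boundedness $|u^X|\le \bar u$ gives $|W^X|\le \bar u/(1-\delta)$ in the unpenalized continuation, and the safe action incurs no penalty because $V$ is invariant under $\underline{x}$. Writing the Bellman inequality $W^X(s)\ge u^X(s,\underline{x}(s),\sigma^E(s,\underline{x}(s)))+\delta\,\mathbb{E}[W^X(s')\mid \underline{x}(s)]$ and comparing with $W^X(s)=u^X(s,x^*,e^*)-Mq(s,x^*)+\delta\,\mathbb{E}[W^X(s')\mid x^*]$ where $q(s,x^*):=Q(S\setminus V\mid s,x^*,\sigma^E(s,x^*))$, I obtain
\[
M\,q(s,x^*)\;\le\; 2\bar u + 2\delta\bar u/(1-\delta) \;=\; \tfrac{2\bar u}{1-\delta}.
\]
Hence whenever $q(s,x^*)>0$ and $M>M^\star:=\frac{2\bar u}{(1-\delta)\,q_\star}$, where $q_\star:=\inf\{q(s,x):s\in V,\ x\in X(s),\ q(s,x)>0\}$, the safe deviation strictly improves, contradicting the MPE property. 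This yields $\mathcal{E}(M)\subseteq\mathcal{E}_V$ for $M\ge M^\star$; viability is thus derived, not imposed.

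For (c), I would endow the policy space with the product topology induced by weak convergence of the associated transition kernels and of the value functions; the equilibrium correspondence $M\mapsto\mathcal{E}(M)$ then has closed graph by Feller continuity, and $\mathcal{E}_V$ is the preimage of the invariance condition, which is closed. Closedness inside a compact ambient space of policies yields compactness of $\mathcal{E}_V$. Upper semicontinuity of $\sigma\mapsto(W^X(\sigma;\cdot),W^E(\sigma;\cdot))$ then implies that the Pareto frontier within $\mathcal{E}_V$ is nonempty (apply Zorn's lemma along chains ordered by the Pareto dominance relation, using compactness to secure upper bounds), and it is closed, hence compact. Finally, $\sigma\mapsto W^X(\sigma;s)$ attains its maximum on this compact set by Weierstrass, and lexicographic tie-breaking selects a unique $\sigma^*$, delivering an SEE.

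\paragraph{Main obstacle.} The delicate step is guaranteeing that the viability threshold $M^\star$ is \emph{uniform} across all MPEs, i.e., that the quantity $q_\star$ above is strictly positive. Without additional regularity, one could in principle have a sequence of equilibria with exit probabilities tending to zero from above at finite $M$. I would handle this by either (i) strengthening the safe-action assumption to require a uniform lower bound $q(s,x)\ge\underline q>0$ on all non-safe actions (so that $q_\star=\underline q$), or (ii) invoking continuity of $q$ in $x$ together with closedness of $\{x:q(s,x)=0\}$ to argue that the exploiter's penalized best-response maximum is attained exclusively on the safe set once $M$ exceeds an explicit threshold derived from the modulus of continuity of $q$ and $u^X$. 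The remaining steps (existence of Pareto frontier and of the exploiter-optimal selection) are then routine consequences of compactness and upper semicontinuity.
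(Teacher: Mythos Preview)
Your proposal is correct and follows essentially the same route as the paper's own sketch: existence of stationary MPE from compactness/continuity, viability from the safe-action deviation comparison once $M$ is large, and renegotiation-proofness together with exploiter-optimality from compactness of $\mathcal{E}_V$ and continuity of the value maps. You are in fact more explicit than the paper on two points---bounding the continuation-value difference by $2\bar u/(1-\delta)$ rather than comparing only one-period payoffs, and flagging that a uniform threshold requires $q_\star>0$---whereas the paper simply asserts that ``by continuity and compactness a finite $M^\star$ exists.''
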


\begin{proof}
Under compactness and continuity, stationary Markov Perfect Equilibria exist. Fix any $s \in V$. If an equilibrium action induces $Q(S \setminus V \mid s,x,e) > 0$, the exploiter's expected one-period payoff is at most $\bar u - M p$ for some $p^*>0$. Under the safe action $\underline{x}(s)$, the expected penalty is zero and the payoff is at least $-\bar u$. For $M$ sufficiently large, the safe action strictly dominates any action with positive exit probability. By continuity and compactness, a finite threshold $M^\star$ exists. Hence any equilibrium must be viable. Existence of renegotiation-proof and exploiter-optimal selections follows by compactness and continuity.
\end{proof}

\paragraph{Finite collapse threshold.}
Because payoffs are bounded and safe actions exist at every state in the survival domain, there exists a finite failure threshold $M^*$ such that any action inducing a positive probability of exit is strictly dominated once $M \ge M^*$. The threshold depends on (i) the maximal one-period gain from over-extraction, (ii) the minimal probability of exit under non-safe actions, and (iii) the discount factor. In particular, higher exit risk under over-extraction lowers the required threshold, while greater patience reduces the attractiveness of failure paths.

\section{Example: Foreign Politics (Hegemon--Client SEE)}
\subsection{Primitives and timing}
Two players: a hegemon $H$ (the exploiter) and a client state $C$ (the exploitee). The state $s_t \in S \subset \mathbb{R}_+$ represents client \emph{political capacity/stability} (e.g., tax capacity, administrative control, consent). Time is discrete $t=0,1,2,\dots$.

Within each period:
\begin{enumerate}
    \item $H$ observes $s_t$ and chooses an \emph{extraction/pressure} level $x_t \in X(s_t)$ (e.g., tribute, resource concessions, policy compliance).
    \item $C$ observes $(s_t,x_t)$ and chooses \emph{effort} $e_t \in E(s_t)$ (e.g., governance effort, compliance/implementation, regime maintenance).
    \item The state updates to $s_{t+1} = f(s_t,e_t) - h(x_t)$, where $f$ is increasing in $s$ and $e$ and $h$ is increasing in $x$; or more generally $s_{t+1} \sim Q(\cdot \mid s_t,x_t,e_t)$.
\end{enumerate}
Both discount with factor $\delta \in (0,1)$. The client cannot exit the relationship (no outside option) but has strategic effort. A viability requirement $s_t \ge s_{\min}$ avoids regime collapse or state failure.

\subsection{Payoffs and feasibility}
Per-period payoffs are continuous and bounded:
\[
u^H(s,x,e) \;=\; \pi(x) \;-\; k(s,e),
\qquad
u^C(s,x,e) \;=\; b(s) \;-\; \phi(e) \;-\; d(x),
\]
with $\pi'(x)>0$, $k \ge 0$ (costs to monitor/coerce or instability spillovers), $b'(s)\ge 0$, $\phi$ convex with $\phi'(0)=0$, and $d'(x)\ge 0$. Feasibility (viability) imposes
\[
x \;\le\; f(s,e) - s_{\min}.
\]

\subsection{Equilibrium notion and SEE}
We adopt a Markov--Stackelberg Equilibrium (MSE): the follower $C$ best responds with $\sigma^C(s,x)$, and the leader $H$ best responds with $\sigma^H(s)$ anticipating $\sigma^C$. A \emph{Sustainable Exploitation Equilibrium} (SEE) is a viable, renegotiation-proof MSE with exploiter-optimal selection (as in Definition~1).

\subsection{Follower's best response}
Given $(s,x)$ and anticipating future play, $C$ solves
\[
\max_{e \in E(s)} \;\Big\{ b(s) - \phi(e) - d(x) + \delta W^C\!\big( f(s,e) - h(x) \big) \Big\}.
\]

\medskip
\noindent
\textbf{Remark.}
If $W_C$ is differentiable, the interior best response satisfies
\[
-\phi'(e^*) + \delta W_C'(s') f_e(s,e^*) = 0.
\]

\subsection{Leader's SEE problem}
Anticipating $e^*(s,x)$, the hegemon solves
\[
V^H(s) \;=\; \max_{x \in X(s)} \;\Big\{ \pi(x) - k(s,e^*(s,x)) + \delta\, V^H\!\big( f(s,e^*(s,x)) - h(x) \big) \Big\}
\]
subject to viability $x \le f(s,e^*(s,x)) - s_{\min}$. Let $\mu \ge 0$ be the multiplier on viability.

\section*{Additional Implementative Remarks}
\paragraph{Robustness to small outside options.}
The no-exit assumption sharpens the exploitation structure but is not essential. Suppose the exploitee has an outside option yielding value at most $\varepsilon>0$ relative to continuation within the survival domain. For sufficiently small $\varepsilon$, survival-preserving stationary Markov equilibria and the renegotiation-stable selection vary continuously. When collapse losses for the exploiter exceed the maximal short-run gain from over-extraction, the viability restriction continues to bind and SEE converges to the baseline no-exit refinement as $\varepsilon \to 0$.

\paragraph{Timing and simultaneity.}
The baseline adopts sequential timing in which extraction precedes effort adjustment, but the refinement does not depend on this structure. Under simultaneous moves, stationary Markov equilibria remain well-defined, and the survival-domain restriction continues to exclude collapse-inducing profiles. The mechanism therefore reflects the truncation of continuation value at failure rather than the timing protocol.

\paragraph{Empirical operationalization.}
The state $s_t$ can be proxied by observable fiscal or administrative capacity indices (e.g. World Bank Governance Indicators), extraction $x_t$ by documented concession or compliance flows, and $s_{\min}$ by sovereign default or regime-transition events observable in panel data. Structural parameters governing persistence and responsiveness can be calibrated using standard approaches in state-capacity models (e.g.\ \citet{BesleyPersson2011}). The viability threshold is disciplined by the present value of foregone extraction rents following collapse.

\section{Conclusion}

Dynamic relationships often involve asymmetric control over current surplus and future viability. When extraction degrades the state sustaining the interaction, standard equilibrium concepts may permit collapse paths even when continuation values remain positive.

We introduce the Sustainable Exploitation Equilibrium (SEE), a refinement of stationary Markov Perfect Equilibrium for environments with irreversible failure. Because failure eliminates continuation value, survival-preserving deviations are credible prior to collapse, making renegotiation-proofness endogenous and restricting equilibrium analysis to viable stationary equilibria. If unconstrained extraction preserves viability, SEE yields an interior outcome; if it would induce failure, equilibrium instead converges to the viability boundary—the maximal level of exploitation consistent with survival.

SEE thus provides a compact framework for analyzing persistent exploitative relationships while clarifying how irreversible failure disciplines incentives without eliminating pressures toward maximal sustainable exploitation.

\bibliographystyle{elsarticle-harv}
\bibliography{winwin}

\end{document}